\theoremstyle{plain}
\newtheorem{theorem}{Theorem}
\newtheorem{lemma}{Lemma}
\theoremstyle{definition}
\newtheorem{definition}{Definition}
\newtheorem{example}{Example}
\newtheorem{construction}{Construction}
\theoremstyle{remark}
\newcommand{\mc}[1]{\mathcal{#1}}
\newcommand{\Mod}[1]{\ \mathrm{mod}\ #1}
\title{Reliable Throughput of Generalized Collision Channel without Synchronization}
\author{\IEEEauthorblockN{Yijun Fan, Yanxiao Liu, Yi Chen\IEEEauthorrefmark{2}, Shenghao Yang, and Raymond W. Yeung}

 
\thanks{Y.~Fan, Y.~Liu and R.~W.~Yeung are with the Department of Information Engineering, The Chinese
University of Hong Kong. 
Y.~Chen and S.~Yang are with the School of Science and Engineering, The Chinese University of Hong Kong, Shenzhen, Shenzhen, China, and also with the Shenzhen Research Institute of Big Data, Shenzhen,
Guangdong, 518172, China. 

The work of S.~Yang was supported in part by the National Key R\&D Program of China under Grant 2022YFA1005000 and by Shenzhen STIC under Grant JCYJ20180508162604311.

The work of Y.~Chen was supported in part by the Guangdong Research Project No. 2017ZT07X152 and by the Guangdong Provincial Key Laboratory of Future Networks of Intelligence with Grant No. 2022B1212010001. 

The work of R.~W.~Yeung was supported in part by a fellowship award from the RGC of Hong Kong SAR, China under Grant CUHK SRFS2223-4S03.

\IEEEauthorrefmark{2}Corresponding author (email: yichen@cuhk.edu.cn).
}
}
\begin{document}
\maketitle

\begin{abstract}
We consider a generalized collision channel model for general multi-user communication systems, an extension of Massey and Mathys' collision channel without feedback for multiple access communications. In our model, there are multiple transmitters and receivers sharing the same communication channel. 
The transmitters are not synchronized and arbitrary time offsets between transmitters and receivers are assumed. 
A ``collision" occurs if two or more packets from different transmitters partially or completely overlap at a receiver. Our model includes the original collision channel as a special case.

This paper focuses on reliable throughputs that are approachable for arbitrary time offsets.  
We consider both slot-synchronized and non-synchronized cases and characterize their reliable throughput regions for the generalized collision channel model. These two regions are proven to coincide. Moreover, it is shown that the protocol sequences constructed for multiple access communication remain ``throughput optimal" in the generalized collision channel model.
We also identify the protocol sequences that can approach the outer boundary of the reliable throughput region.


\end{abstract}

\section{Introduction}
Practical communication systems are often multi-user in nature. There are multiple transmitters and receivers sharing the same communication medium. Consequently, signals intended for a receiver may cause interference at other receivers. The capacity region of a general multi-user system has been a long-standing open problem in information theory. Yet there is a major need to design simple and efficient transmission strategies for these systems and understand their performance limits. This paper will not deal with multi-user information theory but address the above issues under an assumption on how the communication system operates, namely, under a so-called generalized collision channel model.

 The \emph{collision channel} model was originally proposed for multiple access communications, where a common receiver is accessed by multiple transmitters over a shared channel. It is the simplest and nontrivial multi-user system. In the canonical collision channel model \cite{ephremides1998information}, time is divided into timeslots of equal duration. The signals transmitted by the transmitters are modelled by fixed-length packets, each of which fits within one timeslot. A ``collision" occurs if two or more packets overlap at the receiver, and as a result, all collided packets cannot be decoded correctly. Under this model, the slotted ALOHA was developed, where each transmitter sends out a packet independent of the others and thus packet collision cannot be avoided. To have collision-free transmission, time-division multiple access (TDMA) can be used but requires stringent time synchronization among the transmitters and receiver which is sometimes impractical. 

To explore how much transmission capacity would be lost in the absence of synchronization, Massey and Mathys introduced the \emph{collision channel without feedback}~\cite{massey1985collision}. It is assumed that the transmitters cannot synchronize their transmissions as their relative time offsets are unknown to each other and can never be learned on their own due to lack of feedback. They showed that reliable multiple access communication is indeed possible regardless of the time offsets and they characterized the reliable capacity region (which we call the \emph{reliable throughput region} in this paper) of the collision channel without feedback. A key concept in their scheme is the \emph{protocol sequence}, which is a periodic binary sequence assigned to each transmitter indicating when it transmits (sequence values equal 1) and when it keeps silent (sequence values equal 0). The protocol sequences they designed have the ``shift-invariant" property that the throughput of each link (the amount of successful packet transmissions per timeslot) is constant at all possible time offsets, and hence ensure reliable communication. 

Following~\cite{massey1985collision}, many other families of protocol sequences for multiple access communication were studied, such as the shift-invariant sequences \cite{shum2009shift}, the Wobbling sequences~\cite{wong2007new}, and the Chinese Remainder Theorem sequences~\cite{chen2018crt}. 
In addition, some extensions of Massey and Mathys’ collision channel without feedback have also been considered  \cite{tinguely2005collision,bae2014outerbdmr,zhang2016protocol,zhang2020throughtputone}. 
All of the aforementioned works focus on multiple access communications.
In this paper, we consider a general multi-user system and introduce a generalized collision channel model without feedback. In our model, multiple transmitters are paired with multiple receivers, and each pair causes interference to some other pairs that are specified by a collision profile. Arbitrary time offsets are assumed between transmitters and receivers, and are unknown in advance to all the transmitters and receivers. 
Different transmitters and receivers may correspond to one physical node, which provides flexibility in the network structure. When all receivers actually correspond to the same physical node, the generalized collision channel is reduced to the original model in~\cite{massey1985collision}.

We consider both slot-synchronized and non-synchronized communications without feedback, and prove that their reliable throughput regions for the generalized collision channel model are the same.
We characterize the reliable throughput region regardless of the time offsets, which adds to our understanding of the performance limits of multi-user systems. It is also shown that the protocol sequences constructed for multiple access communication in \cite{massey1985collision} can still approach every point in the reliable throughput region of the generalized model.
In particular, the protocol sequences that can approach the outer boundary of the reliable throughput region are identified.

The organization of this paper is as follows. Section~\ref{sec::model} describes the network model including the generalized collision channel without feedback.
Section~\ref{sec::region} analyzes the reliable throughput regions of slot-synchronized and non-synchronized cases. 
Section~\ref{sec::outerbd} investigates the points on the outer boundary of reliable throughput region. Finally, we give a concluding remark in Section \ref{section:conclusion} and further extend our model in Section~\ref{sec::model}.

Throughout this paper, we use $\mathbb{Z}, \mathbb{Z}_+$ and $\mathbb{Z}_{++}$ to denote integers, non-negative integers, and positive integers, respectively. We use $A(i,j)$ to denote the entry at the $i$-th row and $j$-th column of a matrix $\bm A$, and write $\bm A=[A(i,j)]$.
For two matrices $\bm A=[A(i,j)]$ and $\bm B=[B(i,j)]$, we write $\bm A\geq \bm B$ if $A(i,j) \geq B(i,j)$ for all $i$ and $j$, and $\bm A > \bm B$ if $A(i,j) > B(i,j)$ for all $i$ and $j$. We write $\text{diag}()$ for the diagonal matrix whose diagonal entries are the arguments inside the parentheses. 

\section{Network Model}
\label{sec::model}
\subsection{Collision Channel Model without Feedback}
Consider a general communication system 
with $M$ transmitters, $u_1,\ldots, u_M$, paired up with $M$ receivers, $r_1,\ldots,r_M$. Every transmitter $u_i$ intends to send messages to receiver $r_i$, forming a communication link $l_i:=(u_i,r_i)$. The collection of all links forms the link set $\mc L:= \{l_1,l_2,\ldots,l_M\}$. In practice, some transmitters and receivers may correspond to an identical physical node. For example, when all receivers correspond to a single physical receiver, the system becomes a multiple access communication system. When transmitter $u_{i+1}$ and receiver $r_{i}$ are the same physical node for $i=1,\ldots, M-1$ respectively, the system constitutes a line network, as illustrated by Fig.\ref{fig::network_structure}.

\begin{figure}
    \centering
    {\includegraphics[width=0.9\linewidth]{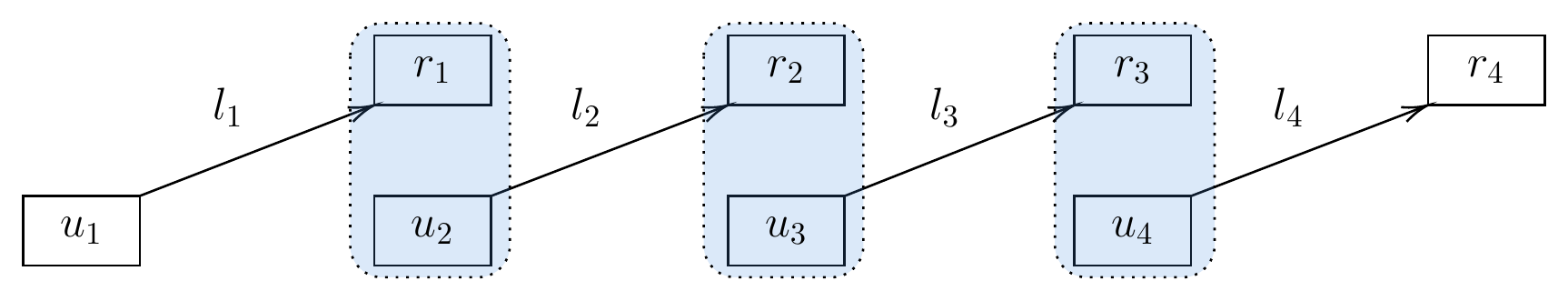}}    
    \caption{Line network. Receiver $r_i$ and transmitter $u_{i+1}$ correspond to one physical node, depicted as one shaded area. Solid arrows represent links.}
\label{fig::network_structure}
\end{figure}

All the $M$ links share the same radio resource and some of them can interfere with one another. When two interfering links are used to transmit packets simultaneously, a ``collision" occurs. We define a generalized collision channel model as follows. For each link $l_i$, let $\mc I(l_i)$ denote the set of links that may cause collisions against $l_i$. 
The collection of the collision set of all links forms the collision profile $\mc I:=\{\mc I(l):l\in\mc L\}$. For multiple access communication considered in \cite{massey1985collision}, we have $\mc I(l)=\mc L\setminus\{l\}$ for any link $l$. For a line network, with the assumption that the signal can propagate over a distance of two hops and is not detected further away, we have $\mc I(l_i)=\{l_{i-1},l_{i+1},l_{i+2},l_{i+3}\}$ for $i=2,\ldots, M-3$. 
To better represent the collision profile, we can use a directed graph, where the vertices are the links, and an edge from vertex A to vertex B exists if A is in B's collision set. 
Fig.~\ref{fig::collision_graph} shows the collision graph of an example line network. 
Without loss of generality, we assume that the collision graph is weakly connected. Otherwise, it should contain some disconnected subgraphs, each of which can be discussed separately. 
As a consequence, the collision graph has no isolated vertex, that is, every link $l$ has a nonempty collision set $\mc I(l)\neq \emptyset$ or falls in a collision set of some other link $l'\in\mc L$ with $l\in\mc I(l')$.

\begin{figure}
	\centering
 \begin{tikzpicture}[al/.style={draw, circle,inner sep=1pt,minimum size=0pt}]
  \node[al] (a) {$l_1$};
  \node[al,right=12mm of a] (b) {$l_2$};
  \node[al,right=12mm of b] (c) {$l_3$};
  \node[al,right=12mm of c] (d) {$l_4$};
  \draw[out=30,in=150,->] (a) to (b);
  \draw[out=30,in=150,->] (b) to (c);
  \draw[out=30,in=150,->] (c) to (d);
  \draw[out=210,in=-30,->] (d) to (c);
  \draw[out=210,in=-30,->] (c) to (b);
  \draw[out=210,in=-30,->] (b) to (a);
  \draw[out=120,in=60,->,distance=7mm] (d) to (a);
  \draw[out=-120,in=-60,->,distance=6mm] (d) to (b);
  \draw[out=-120,in=-60,->,distance=6mm] (c) to (a);
\end{tikzpicture}
 \caption{Collision graph of line network, in which signal can propagate over two hops at most. Each link $l_i$ corresponds to a transmitter-receiver pair $(u_i,r_i)$ in Fig.~\ref{fig::network_structure}.}
\label{fig::collision_graph}
\end{figure}
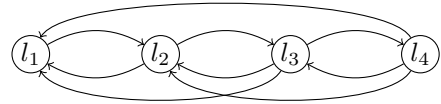

In our collision model, receiver $r_i$ can only detect signals from transmitter $u_j$ with $l_j\in\mc I(l_i)\cup\{l_i\}$. Define an index set $J(i):=\{j: l_j\in\mc I(l_i)\}\cup \{i\}$ for all $i=1,\ldots,M$.
We assume that there is an arbitrary time offset $\delta_i^j$ from $u_j$ to $r_i$ for every $i=1,\ldots, M$ and $j\in J(i)$. It means when transmitter $u_j$ sends a packet at time $t$, receiver $r_i$ will receive this packet at time $t+\delta_i^j$. Such time offsets can model the transmission delay of packets over distance. 
We assume without loss of generality that time is divided into timeslots of equal duration $1$ and a packet can be transmitted in an integer number of timeslots. 

It is possible that not all transmitter's clocks are synchronized. We will show that, however, it is sufficient to consider only the time offsets between transmitters and receivers. Suppose there is a common reference time (CRT) and the difference between $u_j$'s clock and CRT is $d_j$: when $u_j$ starts to transmit at time $t_j$, it is $t_j-d_j$ in CRT. Then a receiver $r_i$ starts to receive packets from $u_j$ at CRT $t_j-d_j+\delta_i^j$. Because of the arbitrariness of time offsets in our model, the observations of all the receivers $r_i$ with $j\in J(i)$ at CRT $t_j-d_j+\delta_i^j$ are not changed if the clock at $u_j$ is synchronized with CRT and the time offset $\delta_i^j$ is changed to $\delta_i^j-d_j$. Therefore, we can assume without loss of generality that all transmitter's clocks are synchronized with CRT. We further assume that the transmission of a packet always starts at the beginning of a timeslot.

We then consider two cases of the possible values of $\delta_i^j$: (1) slot-synchronized cases when all $\delta_i^j$ are integers, (2) non-synchronized cases when any $\delta_j^i$ is an arbitrary real number. 
For notation simplicity, we use a vector $\bm \delta = [\delta_i^j]_{i=1,\ldots,M\text{ and } j\in J(i)}$ to denote all the time offsets.


The collision channel models the situation where packets overlap at any receiver will result in a collision and the colliding packets cannot be decoded correctly. To be more precise, at any receiver $r_i$, a packet sent on link $l_i$ at time $t_i$ will collide with a packet sent on link $l_j\in\mc I(l_i)$ at time $t_j$ if and only if $$|(t_i+\delta_{i}^i)-(t_j+\delta_{i}^j)| < 1.$$ In the case of slot-synchronization, collisions occur only when packets completely overlap in some timeslots. While in the case of non-synchronization, partial overlap of packets in timeslots can also lead to collisions. The packets transmitting on links without collision should be received and decoded correctly at the corresponding receivers.

The time offsets are unknown to the transmitters in advance. Moreover, since there is no feedback, the time offsets cannot be learned during the communication. The purpose of this paper is to characterize the reliable communication throughput region of this generalized collision channel for arbitrary fixed time offsets.

\subsection{Protocol Signal and Reliable Throughput Region}
At any time $t$, a link can have only two possible statuses: active or inactive. Therefore for every link $l_i$, its status is completely determined by a binary protocol signal $s_i(t)$ assigned to transmitter $u_i$, that is, $l_i$ is active if $s_i(t)=1$ and inactive if $s_i(t)=0$. We define timeslot $n\in\mathbb Z$ as the semi-open time interval $n\leq t<n+1$.
Since time is slotted and all transmitters align their packet transmissions within timeslots, the links will always be active or inactive in a timeslot. 
Following \cite{massey1985collision} for analytical convenience, we consider that, for every protocol signal $s_i(t)$, there exist $0<L_i<\infty$, such that $s_i(t)=s_i(t+L_i)$ for all $t\in\mathbb{R}$. We call such an $L_i$ a period of $s_i(t)$.
Since time is divided into slots of duration $1$, $L_i$ is also an integer. Then the protocol signals of all links share a common period $L=\prod_{i:l_i\in\mc L}L_i$. For all $i=1,\ldots,M$, we have $s_i(t) = s_i(t+L)$.

In addition, we define the \textit{duty factor} $f_i$ for protocol signal $s_i(t)$ as the fraction of its nonzero period, i.e., $f_i$ measures the fraction of time during which link $l_i$ is used to transmit packets. Clearly, $0\leq f_i\leq 1$. We further define a duty factor vector $\bm f:=[f_1,\ldots,f_M]$.

Next, we define the \textit{reliable throughput region} of the generalized collision channel. It means the set of all throughput vectors $\bm T=[T_1,T_2,\ldots,T_M]$, with $T_i\geq 0$ packets per timeslot for link $l_i$, that are approachable for any fixed time offsets. By \textit{approachable}, we mean that given any positive number $\epsilon$, there exist protocol signals $s_i(t)$ for each transmitter $u_i$ such that the receiver $r_i$ is able to decode the packets from $u_i$ at a rate no smaller than $T_i-\epsilon$ packets/timeslot, for any fixed values of the time offsets $\bm \delta$. We use $\mc C_s$ and $\mc C_u$ to denote the reliable throughput region in slot-synchronized cases (times offsets are integers) and non-synchronized cases (time offsets are real numbers), respectively. 

The main result of this paper is that we explicitly characterize $\mc C_u$ and $\mc C_s$ and show that $\mc C_u=\mc C_s$. Moreover, given any $\bm T\in \mc C_u$ or $\bm T\in \mc C_s$, we provide a way to construct the protocol signal that can approach it.

It needs to be mentioned that if the time offsets $\bm \delta$ are known in advance, it is possible to design joint protocol signals that are collision-free \cite{ma2021rate, fan2021continuity}. In such collision-free cases, $\bm T=\bm f$ and the reliable throughput region is convex. However, when $\bm \delta$ changes, the customized protocol signals are no longer guaranteed to be collision-free, leading to $\bm T\leq \bm f$. Our definition of throughput can be viewed as the worst performance of protocol signals at any time offsets. If the time offsets are unknown to all links and cannot be learned from feedback, the reliable throughput region of a multiple access channel and its outer boundary has been characterized in several previous works \cite{massey1985collision,zhang2016protocol,zhang2020throughtputone,shum2009shift}. In the following discussion, we extend the analysis to the generalized collision channel model. 

\section{Reliable throughput region}
\label{sec::region}
In this section, we characterize the reliable throughput regions $\mc C_s$ and $\mc C_u$. We first define a region $\mc C$ and then prove that $\mc C = \mc C_s = \mc C_u$.
\begin{definition}\label{def:region}
Given a link set $\mc L=\{l_1,l_2,\ldots,l_M\}$ and its collision profile $\mc I$, $\mc C$ consists of the set of all points $\bm C=[C_1,\ldots,C_M]$, such that \begin{equation}
\label{eq::capacity_pt}
    C_i=f_i\prod_{j:l_j\in\mc I(l_i)}(1-f_j),
\end{equation} where $[f_1,f_2,\ldots,f_M]\in[0,1]^M$.
\end{definition}
In the definition above, $\bm f=[f_1,f_2,\ldots,f_M]$ can be regarded as the duty factor vector of a certain set of protocol signals. 
When $\mc I(l_i)=\mc L\setminus\{l_i\}$ for all $i=1,2,\ldots,M$, which corresponds to the original collision channel model in \cite{massey1985collision}, (\ref{eq::capacity_pt}) is $C_i= f_i\prod_{j=1,j\neq i}^M (1-f_j)$, reducing to the formula in \cite[Theorem~1]{massey1985collision}. 

We can see that $\mc C$ is downward comprehensive. That is, if $\bm C\in \mc C$, any $\bm C'$ satisfying $\bm 0\leq \bm C'\leq \bm C$ also belongs to $\mc C$, where $\bm 0$ is the all-zero vector of compatible size. 
To prove $\mc C=\mc C_u=\mc C_s$, we show in Lemma~\ref{thm::regionequality} that the throughput vectors outside $\mc C$ are not approachable. 
\begin{lemma}
\label{thm::regionequality}
    Under the same $\mc L$ and $\mc I$, $\mc C_u\subseteq \mc C_s\subseteq\mc C$.
\end{lemma}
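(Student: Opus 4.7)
The plan is to prove the two inclusions $\mc C_u\subseteq\mc C_s$ and $\mc C_s\subseteq\mc C$ separately. For the first, I note that every integer offset vector is a special case of a real offset vector, and on integer offsets the non-synchronized collision criterion $|(t_i+\delta_i^i)-(t_j+\delta_i^j)|<1$ reduces exactly to the slot-synchronized criterion that the two reception timeslots coincide. Hence any scheme that, for a given $\epsilon>0$, guarantees throughput at least $\bm T-\epsilon\bm 1$ uniformly over all real $\bm\delta$ also guarantees it over all integer $\bm\delta$, so approachability in $\mc C_u$ implies approachability in $\mc C_s$.

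The substantive inclusion is $\mc C_s\subseteq\mc C$, which I would establish by an averaging argument over the adversary's shifts. Fix any protocol signals $s_1,\ldots,s_M$ with common period $L$ and duty factors $\bm f$. With integer offsets the throughput on link $l_i$ depends only on the relative shifts $\tau_{ij}:=(\delta_i^i-\delta_i^j)\Mod{L}$ for $l_j\in\mc I(l_i)$, and equals
\[
T_i(\bm\tau_i)=\frac{1}{L}\sum_{n=0}^{L-1}s_i(n)\prod_{j:\,l_j\in\mc I(l_i)}\bigl(1-s_j(n+\tau_{ij})\bigr).
\]
Because the offsets at distinct receivers are independent parameters, the adversary can minimize $T_i$ separately for each $i$. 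I would then bound that minimum from above by the expectation of $T_i(\bm\tau_i)$ under $\tau_{ij}$ i.i.d.\ uniform on $\{0,\ldots,L-1\}$: periodicity gives $\frac{1}{L}\sum_{\tau=0}^{L-1}s_j(n+\tau)=f_j$, so the expectation factors and collapses to $f_i\prod_{j:\,l_j\in\mc I(l_i)}(1-f_j)$. Since the minimum is at most the mean, the worst-case throughput at link $l_i$ is at most this quantity, i.e.\ $\bm T\leq\bm C$ coordinatewise for the point $\bm C\in\mc C$ induced by the same $\bm f$.

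To promote this attainability bound to the approachable set, for each $\epsilon>0$ I take a scheme witnessing $\bm T-\epsilon\bm 1$ with duty factors $\bm f^{(\epsilon)}$, yielding $T_i-\epsilon\leq f_i^{(\epsilon)}\prod_j(1-f_j^{(\epsilon)})$. Since $\bm f^{(\epsilon)}\in[0,1]^M$ is compact, I extract a convergent subsequence $\bm f^{(\epsilon_k)}\to\bm f^*$ and use continuity of the right-hand side in $\bm f$ to obtain $\bm T\leq\bm C^*$ with $\bm C^*\in\mc C$; the downward comprehensiveness of $\mc C$ noted in the paper then gives $\bm T\in\mc C$. The main obstacle I expect is articulating cleanly that the adversary's parameter choices decouple across receivers, so that a single $\bm f$ controls every coordinate simultaneously, together with the compactness/continuity step used to pass from attainability to approachability; the averaging identity itself is routine once the shifts are reindexed.
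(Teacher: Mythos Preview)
Your proposal is correct and follows essentially the same route as the paper: impose a uniform distribution on the (integer) offsets, compute the expected throughput of link $l_i$ as $f_i\prod_{j:\,l_j\in\mc I(l_i)}(1-f_j)$ using independence and periodicity, and conclude that some offset vector drives every coordinate below its mean. Your write-up is in fact tighter than the paper's in two places: you make explicit the decoupling across receivers (each $T_i$ depends on its own block $\{\delta_i^j\}_{j\in J(i)}$, so the adversary minimizes coordinatewise), and you add the compactness/continuity step to pass from the per-$\epsilon$ bound to $\bm T\in\mc C$, whereas the paper simply asserts $\bm T\le\mathbf E[\bm T^e(\bm s^*,\bm\delta)]$ for ``some'' $\bm s^*$.
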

\begin{proof}
By the definitions of $\mc C_s$ and $\mc C_u$, and the fact that the possible values of time offsets for the slot-synchronized case are a subset of those for the non-synchronized case, we directly have $\mc C_u\subseteq \mc C_s$. In the rest of the proof, we only need to focus on the synchronized case and show that $\mc C_s\subseteq\mc C$.

For the purposes only of our proof, we impose a fictitious probability distribution on the time offsets $\delta_j^i$ for all $i=1,\ldots,M$ and all $j$ such that $j=i$ or $l_i\in\mc I(l_j)$. They are independent and identically distributed random variables that are equally likely to take on any values of $0,1,\ldots, L$. 

Given any choice of the protocol signals, $s_i(t)$ for $i=1,\ldots,M$, collectively denoted by $\bm s$, it follows from \eqref{equ:periodic} the periodicity of $s_i(t)$ and from the definition of duty factor $f_i$, that 
\begin{equation}\label{equ:duty}
\mathbf E[s_i(t-\delta_j^i)] = f_i
\end{equation}
for every time instant $t$. Therein the expectation $\mathbf E[\cdot]$ is taken with respect to $\delta_j^i$. 

Recall that $\bm\delta$ is used to collectively denote all $\delta_j^i$. Given a realization of the time offsets $\bm \delta$, every receiver $r_i$ at its time $t$ receives packets from its intended transmitter $u_i$ without collision if and only if $s_i(t-\delta_i^i)\prod_{j:l_j\in\mc I(l_i)}(1-s_j(t-\delta_i^j))=1$. Consider a time period $[t_0,t_0+L)$ for some $t_0$ being the boundary of a timeslot at the receiver. Recall that all transmitters align their packet transmissions within timeslots and thus collisions will happen only when received packets completely overlap. The effective throughput, denoted by $T^e_i(\bm s,\bm\delta)$, of link $l_i$ can be computed as follows
 \begin{align}\label{effectivethp}
    T^e_i(\bm s, \bm\delta):=\frac{1}{L}\int_{t=t_0}^{t_0+L} s_i(t-\delta_i^i)\prod_{j:l_j\in\mc I(l_i)} (1-s_j(t-\delta_i^j)) dt.
\end{align} 
Taking $\bm\delta$ into consideration, the expected throughput of link $l_i$ is\begin{align*}
    &\mathbf{E}[T^e_i(\bm s, \bm\delta)]\\
   =  &\mathbf{E}\left[\frac{1}{L}\int_{t=t_0}^{t_0+L}s_i(t-\delta_i^i)\prod_{j:l_j\in\mc I(l_i)}(1-s_j(t-\delta_i^j))dt\right]\\
    =& \frac{1}{L}\int_{t=t_0}^{t_0+L} \mathbf{E}\Big[s_i(t-\delta_i^i)\prod_{j:l_j\in\mc I(l_i)}(1-s_j(t-\delta_i^j))\Big]dt\\
    \overset{(a)}=& \frac{1}{L}\int_{t=t_0}^{t_0+L} \mathbf{E}\big[s_i(t-\delta_i^i)\big]\prod_{j:l_j\in\mc I(l_i)}(1-\mathbf{E}\big[s_j(t-\delta_i^j)\big])dt\\
    \overset{(b)}= &\frac{1}{L}\int_{t=t_0}^{t_0+L} f_i\prod_{j:l_j\in\mc I(l_i)}(1-f_j)dt\\
    = &f_i\prod_{j:l_j\in\mc I(l_i)}(1-f_j).
\end{align*}
Equality (a) holds as $\delta_j^i$ are independent and equality (b) holds from \eqref{equ:duty}. 

For any protocol signals $\bm s$, $\mathbf{E}[T^e_i(\bm s, \bm\delta)]$ tells the average throughputs of all links by taking into consideration all possible time offsets $\bm\delta$. Therefore there must exist a time offset $\bm \delta^*$, such that 
\begin{equation}\label{equ:upperbound}
T^e_i(\bm s, \bm\delta^*) \leq \mathbf{E}[T^e_i(\bm s, \bm\delta)]=f_i\prod_{j:l_j\in\mc I(l_i)}(1-f_j)
\end{equation}
for all links $l_i$. Let $\bm T^e(\bm s,\bm \delta)=(T^e_1(\bm s, \bm\delta),\ldots,T^e_M(\bm s, \bm\delta))$.

For any $\bm T=(T_1,\ldots,T_M)\in \mc C_s$, it must be approachable by some protocol signals, say $\bm s^*$. Since $\bm T$ is approachable regardless of the time offsets, \eqref{equ:upperbound} implies that $\bm T\leq \mathbf{E}[\bm T^e(\bm s^*, \bm\delta)]$ where the inequality is defined component-wise. Meanwhile we see that $\mathbf{E}[\bm T^e(\bm s^*, \bm\delta)]\in \mc C$, 
indicating $\bm T\in \mc C$. We have proved that $\mc C_s\subseteq \mc C$.
\end{proof}

In the rest of this section, we construct protocol sequences to approach all points in $\mc C$ for both slot-synchronized and non-synchronized cases. We first prove that $\mc C\subseteq \mc C_s$ through a constructive proof: given any $\bm C\in\mc C$, we can construct a set of protocol sequences that approaches $\bm C$. Then we prove $\mc C\subseteq \mc C_u$ by extending this construction. For the time being, we consider the special case that $\bm \delta$ takes integer values. 

Recall that the timeslot $n\in\mathbb Z$ is the time interval $n\leq t<n+1$ and $s_i(t)$ is either $0$ or $1$ within timeslot $n$. As a result, the protocol signal $s_i(t)$ can be equivalently represented by a binary periodic \emph{protocol sequence} $s_i:=[s_i(0),s_i(1),\ldots,s_i(L-1)]$ ($L\in\mathbb Z_{++}$ is the period). Link $l_i$ is used to transmit a packet in timeslot $n$ if $s_i(n\Mod{L})=1$, and is idle otherwise. On the receiver side, it will receive packets according to a shifted version of the protocol sequence, and the amount of the shift is determined by the time offset. For example, receiver $r_i$ will observe the packets from transmitter $u_j$ for $j\in J(i)$, according to the sequence $[s_j(-\delta_i^j),\ldots,s_j(L-\delta_i^j)]$, which is obtained from $s_j$ by $\delta_i^j$ right cyclic shifts.

Here we define two kinds of submatrices that will appear frequently in the rest of this section. For any $M\times L$ matrix $\bm A=[A(j,t)]$ and $i=1,\ldots,M$, we define the submatrix $\bm A^i:=[A(j,t)]_{j\in J(i)}$ 
and the shifted submatrix under integer time offsets $\bm A^i[\bm \delta]:= [A(j, (t-\delta_i^j) \Mod{L})]_{j\in J(i)}$.  
Below is a concrete example.
\begin{example}
\label{example::time_offset}
    Let $\bm A=\begin{bmatrix}
        1&0&1&0\\
        1&1&0&0\\
        0&1&0&1
    \end{bmatrix}.$ For receiver $r_2$, let $J(2)=\{2,3\}$, $\delta_2^2=1$ and $\delta_2^3=3$. Then $[A(2,t)]=[1,1,0,0]$ and $[A(3,t)]=[0,1,0,1]$. Since $L=4$, we can compute\begin{align*}
        [A(2, &(t-\delta_2^2)\Mod{4})] \\
        &= [A(2,3),A(2,0),A(2,1),A(2,2)] = [0,1,1,0]\\
        [A(3, &(t-\delta_2^3)\Mod{4})]\\
        &=[A(3,1),A(3,2),A(3,3),A(3,0)] = [1,0,1,0].
    \end{align*}
    Hence, we have\begin{align*}
        \bm A^2 = \begin{bmatrix}
        1&1&0&0\\
        0&1&0&1
        \end{bmatrix} \text{ and } 
         \bm A^2[\bm \delta] = \begin{bmatrix}
        0&1&1&0\\
        1&0&1&0
        \end{bmatrix}.
    \end{align*}
    Fig.~\ref{fig::time_offset} is a graphical explanation of $\bm A^i[\bm \delta]$.
\end{example}
\begin{figure}
    \centering
    {\includegraphics[width=.9\linewidth]{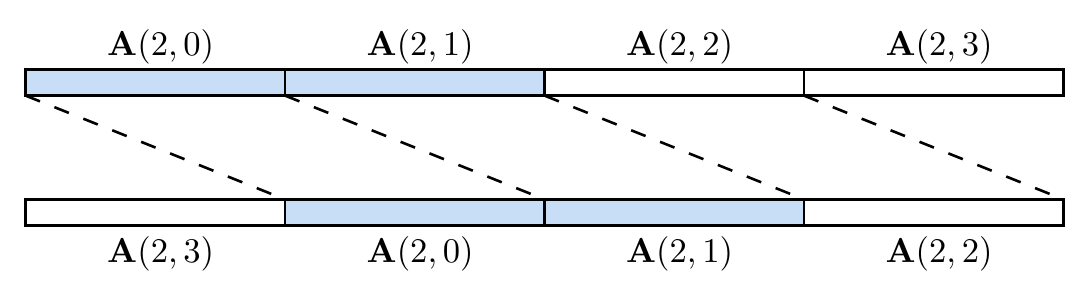}}
    \caption{The row $[A(j,(t+\delta_i^j)\Mod{L})]$ is obtained from $[A(j,t)]$ by $\delta_i^j$ right cyclic shifts. Each shaded slot represents an entry with a value $1$.}
\label{fig::time_offset}
\end{figure}

A \emph{protocol matrix} is defined by $\bm S:=[s_1, s_2,\ldots,s_M]^{\intercal}\in\{0,1\}^{M\times L}$, where the $i$-th row of $\bm S$ is the binary periodic protocol sequence $s_i$ for link $l_i$ with period $L$. In the slot-synchronized cases, when all transmitters follow the protocol matrix $\bm S$, the sequences observed by receiver $r_i$ will form the matrix $\bm S^i[\bm \delta]$. 
We follow the sequence construction scheme in \cite{massey1985collision} described in Construction~\ref{construction_1} and use $\bm S_{M,q}$ to denote the protocol matrix constructed by this scheme.

\begin{construction}\cite{massey1985collision}:
\label{construction_1}
Given rational duty factors $f_1,\ldots, f_M$, where $f_i=q_i/q$ for some $q_i\in\mathbb{Z}_{+}$ and $q\in\mathbb{Z}_{++}$, we use $\bm A_{M,q}$ to denote the $M\times q^M$ matrix whose $j$-th column is the $M$ place radix-$q$ representation of the integer $q^M-j$, with the least significant digit at the top. A protocol matrix $\bm S_{M,q}$ can be constructed by mapping, within the $i$-th row of $\bm A_{M,q}$, the $q$-ary digits $q-1,1-2,\ldots,q-q_i$ to value $1$ and mapping the $q$-ary digits $q-q_i-1,\ldots, 0$ to value $0$. 
\end{construction}

Here, $\bm S^i_{M,q}[\bm \delta]$, mapped from $\bm A^i_{M,q}[\bm \delta]$, denotes the row-cyclic-shifted submatrix of $\bm S_{M,q}$ observed at receiver $r_i$.
Reference \cite{massey1985collision} only considered multiple access communications where $J(i)=\{1,\ldots, M\}$ and $\delta_1^i=\ldots=\delta_M^i$ for all $i=1,\ldots, M$. As a result, we let $\bm A_{M,q}^1[\bm \delta]=\ldots=\bm A_{M,q}^M[\bm \delta]=\bm A_{M,q}[\bm\delta]$ and $\bm S_{M,q}^1[\bm \delta]=\ldots=\bm S_{M,q}^M[\bm \delta]=\bm S_{M,q}[\bm\delta]$. Massey and Mathys proved that for any $\bm \delta$, the columns of the row-cyclic-shifted matrix $\bm A_{M,q}[\bm \delta]$ are a permutation of those of $\bm A_{M,q}$, as are $\bm S_{M,q}[\bm \delta]$ and $\bm S_{M,q}$. We call this property of the sequences ``shift invariant". Using this property, \cite{massey1985collision} also showed that the reliable throughput satisfies $f_i\prod_{j:j\neq i}(1-f_j)$ for all $i$ and arbitrary integer time offsets $\bm \delta$.

However, in our generalized collision channel model, when $\bm S_{M,q}$ is used to instruct the transmission, the observed matrix $\bm S_{M,q}^i[\bm \delta]$ at receiver $r_i$ varies with $i$. Therefore the results of \cite{massey1985collision} cannot be applied directly. 
In the following lemma, we first prove a similar property that $\bm A^i_{M,q}[\bm \delta]$ ($\bm S^i_{M,q}[\bm \delta]$) is a column permutation of $\bm A^i_{M,q}$ ($\bm S^i_{M,q}$), and then show that the throughputs of both $\bm S^i_{M,q}$ and $\bm S^i_{M,q}[\bm \delta]$ are $f_i\prod_{j:l_j\in\mc I(l_i)}(1-f_j)$ for arbitrary time offsets $\bm \delta$, leading to the conclusion that any $\bm C\in\mc C$ is approachable and thus belong to $\mc C_s$.



\begin{lemma}
\label{lemma::perm}
    For any integer time offsets $\bm \delta$ and collision set $\mc I(l_i)$ at receiver $r_i$, $i=1,2,\ldots,M$, the columns of $\bm A^i_{M,q}[\bm \delta]$ consists of all $q^{|J(i)|}$ possible tuples in $\{0,1,\ldots,q-1\}^{|J(i)|}$, and each possible tuple appears exactly $q^{M-|J(i)|}$ times. Therefore, $\bm A^i_{M,q}[\bm \delta]$ is a column permutation of $\bm A^i_{M,q}$.
\end{lemma}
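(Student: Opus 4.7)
The plan is to reduce this submatrix statement to the shift-invariance property of the full matrix $\bm A_{M,q}$ that Massey and Mathys established and that is cited in the paragraph preceding the lemma. Given receiver $r_i$ and the integer offsets $\{\delta_i^j : j \in J(i)\}$, I would extend them to an $M$-dimensional integer vector $\tilde{\bm\delta}$ by keeping $\tilde\delta^j := \delta_i^j$ for $j \in J(i)$ and setting $\tilde\delta^j := 0$ for every $j \notin J(i)$. Applying the Massey-Mathys shift-invariance to $\tilde{\bm\delta}$, the fully row-shifted matrix $\bm A_{M,q}[\tilde{\bm\delta}]$ is a column permutation of $\bm A_{M,q}$.

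Next I would use the elementary observation that restriction to a fixed subset of rows commutes with column permutations: if two matrices have the same multiset of columns, then their restrictions to any common row index set also have the same multiset of columns. Since $\bm A^i_{M,q}[\bm\delta]$ is precisely the restriction of $\bm A_{M,q}[\tilde{\bm\delta}]$ to the rows indexed by $J(i)$, and $\bm A^i_{M,q}$ is the analogous restriction of $\bm A_{M,q}$, this immediately yields that $\bm A^i_{M,q}[\bm\delta]$ is a column permutation of $\bm A^i_{M,q}$.

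For the tuple-counting claim, I would invoke the defining property of $\bm A_{M,q}$: by construction, its $q^M$ columns enumerate every element of $\{0,1,\ldots,q-1\}^M$ exactly once. Projecting this enumeration onto the $|J(i)|$ coordinates indexed by $J(i)$, each $|J(i)|$-tuple in $\{0,1,\ldots,q-1\}^{|J(i)|}$ appears in exactly $q^{M-|J(i)|}$ columns of $\bm A^i_{M,q}$, and hence in exactly that many columns of $\bm A^i_{M,q}[\bm\delta]$ by the permutation conclusion of the previous step.

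The only subtle point is to confirm that the Massey-Mathys result being invoked is the general per-row shift version (each row $j$ cyclically shifted by its own integer $\tilde\delta^j$) and not merely a single-common-shift statement; the notation $\bm A_{M,q}[\bm\delta]$ used in the paper's exposition already reflects the per-row convention, so this is unproblematic. Once that is checked, the extension trick above reduces the generalized-channel submatrix claim to the known multiple-access statement with essentially no additional work. The main conceptual point to spot is that the rows outside $J(i)$ are invisible to receiver $r_i$ and can therefore be padded with arbitrary dummy offsets without altering the multiset of columns one sees on the rows in $J(i)$.
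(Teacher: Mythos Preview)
Your proposal is correct and shares the paper's core reduction: extend the offsets by setting $\tilde\delta^j=0$ for $j\notin J(i)$, invoke the Massey--Mathys per-row shift-invariance on the full $M\times q^M$ matrix, and then restrict to the rows in $J(i)$. The one difference is in how the tuple count is extracted after this reduction. The paper runs an induction on the number of deleted rows, using a pigeonhole argument at each step to show that deleting one row turns $q^{M-m}$ distinct columns (each with multiplicity $q^m$) into $q^{M-m-1}$ distinct columns (each with multiplicity $q^{m+1}$). Your route is shorter: you observe directly that column permutations commute with row restriction, and that projecting the full enumeration of $\{0,\ldots,q-1\}^M$ onto any $|J(i)|$ coordinates hits each $|J(i)|$-tuple exactly $q^{M-|J(i)|}$ times. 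Both arguments are sound; yours avoids the induction entirely and is the cleaner of the two.
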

\begin{proof}
    

    Fix a receiver $r_i$, $\bm \delta$ and collision set $\mc I(l_i)$. For the purpose of the proof, we set time offsets $\delta_i^j=0$ for $j\notin J(i)$ that has no real meaning. Define $\bm \delta_i:=[\delta_i^j]_{j:l_j\in\mc L}$ for $i=1,\ldots,M$. Then for each $i$, $\bm A_{M,q}[\bm \delta_i]=[A_{M,q}(j, (t-\delta_i^j)\Mod{L})]$ is a row-cyclic-shifted matrix with size $M\times q^M$. Observe that $\bm A^i_{M,q}[\bm \delta]$ can be obtained by deleting the $j$-th row in $\bm A_{M,q}[\bm \delta_i]$ for $j\notin J(i)$. Our proof is by induction on the number of deleted rows. Note that the columns of $\bm A_{M,q}[\bm \delta_i]$ contain all possible $q$-ary $M$-tuples and every column in $\bm A_{M,q}[\bm \delta_i]$ appears once. 

    Let $\bm A_{M,q}^{[-m]}[\bm \delta_i]$ denote the matrix after deleting $m$ arbitrary rows from $\bm A_{M,q}[\bm \delta_i]$, for $0<m < M$.
    First, consider $m=1$. With $M-1$ rows, $\bm A_{M,q}^{[-1]}[\bm \delta_i]$ has at most $q^{M-1}$ distinct columns. Suppose there exists a column in $\bm A_{M,q}^{[-1]}[\bm \delta_i]$ that appears more than $q$ times. Then in $\bm A_{M,q}[\bm \delta_i]$, there must exist one column that appears more than once, leading to a contradiction. Hence, each column in $\bm A_{M,q}^{[-1]}[\bm \delta_i]$ can appear less than or equal to $q$ times. Furthermore, since there are in total $q^M$ columns of $\bm A_{M,q}^{[-1]}[\bm \delta_i]$, it must consists $q^{M-1}$ distinct columns, each appearing $q$ times.

    Assume the induction hypothesis is true for $\bm A_{M,q}^{[-m]}[\bm \delta_i]$. That is, $\bm A_{M,q}^{[-m]}[\bm \delta_i]$ consists $q^{M-m}$ distinct columns, each appearing $q^m$ times. We can use an argument similar to the previous one to show that this is also true for $\bm A_{M,q}^{[-(m+1)]}[\bm \delta_i]$, which is obtained by deleting one arbitrary row from $\bm A_{M,q}^{[-m]}[\bm \delta_i]$. With $M-m-1$ rows, $\bm A_{M,q}^{[-(m+1)]}[\bm \delta_i]$ has at most $q^{M-(m+1)}$ distinct columns. Suppose there exists one column with more than $q^{m+1}$ occurrences in $\bm A_{M,q}^{[-(m+1)]}[\bm \delta_i]$, then $\bm A_{M,q}^{[-m]}[\bm \delta_i]$ contains a column with more than $q^{m}$ occurrences, contradicting to the induction hypothesis. Together with the fact that $\bm A_{M,q}^{[-(m+1)]}[\bm \delta_i]$ has $q^M$ columns, we claim that $\bm A_{M,q}^{[-(m+1)]}[\bm \delta_i]$ has $q^{M-(m+1)}$ distinct columns, each appearing $q^{m+1}$ times.

    Since $\bm A^i_{M,q}[\bm \delta]$ falls to some $\bm A_{M,q}^{[-m]}[\bm \delta_i]$ with $m=M-|J(i)|$, we complete the proof.

\end{proof}
We can prove $\mc C\subseteq \mc C_s$ based on Lemma~\ref{lemma::perm}. For non-synchronized cases, i.e., $\bm \delta$ may not take integer values, we adjust $\bm S_{M,q}$ from Construction~\ref{construction_1} by replacing $0$ with $0^k$ and $1$ with $1^{k-1}0$, where $k\in\mathbb{Z}_{++}$ is arbitrary, $0^k$ and $1^{k-1}$ denote a string of $k$ zeros and a string of $k-1$ ones. Because of the arbitrariness of $k$, all the points in $\mc C$ are approachable using the adjusted protocol matrices for non-synchronized collision channels. Hence, we have $\mc C\subseteq \mc C_u$ as well, leading to the following result.

\begin{theorem}
    Under the same $\mc L$ and $\mc I$, $\mc C_s=\mc C_u=\mc C$.
    \label{thm::region_eq}
\end{theorem}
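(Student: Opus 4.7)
The plan is to establish the theorem via the cycle of inclusions $\mc C \subseteq \mc C_u \subseteq \mc C_s \subseteq \mc C$. The last two inclusions come directly from Lemma~\ref{thm::regionequality}, so only $\mc C \subseteq \mc C_s$ and $\mc C \subseteq \mc C_u$ remain. I would prove both by giving, for every target point in $\mc C$, an explicit protocol sequence drawn from Construction~\ref{construction_1} (or a padded version of it) whose guaranteed throughput approaches that point regardless of the time offsets.

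For $\mc C \subseteq \mc C_s$, I take any $\bm C \in \mc C$ realised by a duty factor vector $\bm f \in [0,1]^M$, approximate each $f_i$ by a rational $q_i/q$ sharing a common denominator $q$, and apply Construction~\ref{construction_1} to build $\bm S_{M,q}$. For any integer offsets $\bm\delta$, Lemma~\ref{lemma::perm} tells us that the columns of $\bm A^i_{M,q}[\bm\delta]$ contain every $q$-ary $|J(i)|$-tuple exactly $q^{M-|J(i)|}$ times. Link $l_i$ delivers a packet in timeslot $t$ exactly when the $t$-th column of $\bm A^i_{M,q}[\bm\delta]$ has its $i$-th entry in $\{q-1,\ldots,q-q_i\}$ (mapped to $1$) and, for each $j$ with $l_j \in \mc I(l_i)$, its $j$-th entry in $\{0,\ldots,q-q_j-1\}$ (mapped to $0$). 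Counting such columns gives a throughput of
\begin{equation*}
\frac{q_i \prod_{j:l_j\in\mc I(l_i)}(q-q_j)\cdot q^{M-|J(i)|}}{q^M} = f_i\prod_{j:l_j\in\mc I(l_i)}(1-f_j)=C_i,
\end{equation*}
independent of $\bm\delta$. Since rational vectors are dense in $[0,1]^M$, $\bm C$ is approachable and lies in $\mc C_s$.

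For the non-synchronized extension $\mc C \subseteq \mc C_u$, I would pad every entry of $\bm S_{M,q}$, replacing each $0$ by $0^k$ and each $1$ by $1^{k-1}0$. The trailing $0$ in each active block acts as a guard of length one, which absorbs the fractional part of any real time offset, so that a collision at $r_i$ in the padded system under a real $\bm\delta$ can occur only when the corresponding integer-rounded offsets would already cause a collision in the un-padded system. Re-running the slot-synchronized count on the expanded matrix yields a throughput
\begin{equation*}
\frac{f_i(k-1)}{k}\prod_{j:l_j\in\mc I(l_i)}\Bigl(1-\frac{f_j(k-1)}{k}\Bigr),
\end{equation*}
which tends to $C_i$ as $k\to\infty$, establishing $\bm C \in \mc C_u$.

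The main obstacle is the non-synchronized step: one has to make the guard-interval reasoning precise, i.e., verify that under any real $\bm\delta$ the set of successfully decoded timeslots in the padded system strictly contains the set predicted by the integer-rounded slot-synchronized analysis, and then balance the resulting $(k-1)/k$ discount against the rational approximation of $\bm f$ so that both errors vanish in the limit. Once this is handled, the counting argument from Lemma~\ref{lemma::perm} and a standard density argument close out the proof.
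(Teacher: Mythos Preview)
Your slot-synchronized argument is essentially the paper's: rational approximation, Construction~\ref{construction_1}, the column count from Lemma~\ref{lemma::perm}, and a density/continuity step. That part is fine.

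The non-synchronized step, however, has a genuine gap. Your claimed lower bound
\[
\frac{f_i(k-1)}{k}\prod_{j:l_j\in\mc I(l_i)}\Bigl(1-\frac{f_j(k-1)}{k}\Bigr)
\]
is \emph{not} a valid lower bound on the worst-case throughput of the padded protocol. The intuition that ``the trailing~$0$ absorbs the fractional part of the offset'' points in the wrong direction: the guard slot at the end of $1^{k-1}0$ does not shrink an \emph{interferer}'s footprint at~$r_i$. Under a non-integer relative offset, the interferer's run of $k-1$ ones straddles $k$ consecutive receiver timeslots, so its effective duty factor at~$r_i$ is (pessimistically) the original~$f_j$, not the reduced $(k-1)f_j/k$ that your formula assumes. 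A concrete counterexample: with two links, $\mc I(l_1)=\{l_2\}$, $f_1=f_2=1$, and relative offset $\delta_1^1-\delta_1^2=1/2$, every one of user~$1$'s $k-1$ packets in the padded scheme overlaps some packet of user~$2$, so the throughput is~$0$; your formula predicts $(k-1)/k^2>0$. A second issue is that the padded matrix $\bm S^{(k)}_{M,q}$ is \emph{not} the Construction~\ref{construction_1} matrix for duty factors $(k-1)f_i/k$, so Lemma~\ref{lemma::perm} cannot be ``re-run on the expanded matrix'' to produce that product formula in the first place.

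The fix is the one the paper uses (following Massey--Mathys): at receiver~$r_i$, align timeslots with user~$i$'s arrivals and observe that, for every interferer $l_j\in\mc I(l_i)$, the collision effect of $s_j^{(k)}$ under a real offset is dominated by that of the sequence obtained by replacing each $1^{k-1}0$ in $s_j^{(k)}$ with $1^k$. Decimating this dominated system by~$k$ yields $k-1$ copies of $\bm S^i_{M,q}$ (to which Lemma~\ref{lemma::perm} does apply, giving throughput $T_i$ each) and one copy in which user~$i$'s row is identically zero. Hence the guaranteed throughput is $(k-1)T_i/k$, which tends to~$T_i$ as $k\to\infty$. Note that the $(k-1)/k$ loss is borne only by the \emph{intended} user's duty factor, not by the interferers'.
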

\begin{proof}
1) Synchronized case ($\mc C_s=\mc C$):  Consider any $\bm C\in\mc C$ that can be expressed as \eqref{eq::capacity_pt} for some rational numbers $f_i\in [0,1]$, $i=1,2,\ldots, M$. These rational numbers can then be written as $f_i=q_i/q$ for some $q_i\in\mathbb{Z}_{+}$ and $q\in\mathbb{Z}_{++}$. We use $f_1,\ldots,f_M$ as the duty factors and employ Construction \ref{construction_1} to construct $\bm A_{M,q}$ and $\bm S_{M,q}$. User $u_i$ transmits packets following the $i$-th row $s_i$ of $\bm S_{M,q}$.

By Lemma \ref{lemma::perm}, we know that for any integer time offsets $\bm \delta$, the columns of $\bm A^i_{M,q}[\bm \delta]=[A_{M,q}(j, (t-\delta_i^j)\Mod{q^M})]_{j\in J(i),  }$ consists of all $q^{|J(i)|}$ possible tuples in $\{0,1,\ldots,q-1\}^{|J(i)|}$, and each possible tuple appears exactly $q^{M-|J(i)|}$ times. Therefore, in $\bm A^i_{M,q}[\bm \delta]$, there are $q^{M-|J(i)|}q_i\prod_{j:l_j\in \mc I(l_i)}(q-q_j)$ columns in which the entry in row 
$[A_{M,q}(i, (t-\delta_i^i)\Mod{q^M})]_{ }$ is a digit larger than or equal to $q-q_i$, while the entry in each other row 
$[A_{M,q}(j, (t-\delta_i^j)\Mod{q^M})]^{ }$ ($l_j\in\mc I(l_i)$) is an integer less than $q-q_j$. Recall that from $\bm A^i_{M,q}[\bm \delta]$ to $\bm S^i_{M,q}[\bm \delta]$ follows the q-ary to binary digit mapping. As a result, in $\bm S^i_{M,q}[\bm \delta]$, there are $q^{M-|J(i)|}q_i\prod_{j:l_j\in \mc I(l_i)}(q-q_j)$ columns in which the entry in row $[S_{M,q}(i, (t-\delta_i^i)\Mod{q^M})]_{ }$ is $1$, while the entry in each other row $[S_{M,q}(j, (t-\delta_i^j)\Mod{q^M})]^{ }$ ($l_j\in\mc I(l_i)$) is $0$. This indicates that receiver $r_i$ receives exactly $q^{M-|J(i)|}q_i\prod_{j:l_j\in \mc I(l_i)}(q-q_j)$ successful packets from $u_i$. Finally, for every $\bm \delta$ and $i=1,\ldots,M$, we have
\begin{align*}
        T_i^e(\bm S^i_{M,q}[\bm \delta])&= \frac{1}{q^M}q^{M-|J(i)|}q_i\prod_{j:l_j\in \mc I(l_i)}(q-q_j)\\
        &= \frac{q_i}{q} \prod_{j:l_j\in\mc I(l_i)}\frac{q-q_j}{q}\\
        &= f_i\prod_{j:l_j\in\mc I(l_i)}(1-f_j)\\
        &=C_i,
    \end{align*} which proves that $\bm C\in \mc C_s$. 

Now let $\bm C'$ be any vector in $\mc C$ that be expressed as \eqref{eq::capacity_pt} for some $\bm f'=[f_1',\ldots,f_M']$ (not necessary to be rational numbers). Due to the dense nature of rational numbers, each open neighbourhood of $\bm f'$ contains vectors with only rational components. Moreover, since the mapping (\ref{eq::capacity_pt}) from $[0,1]^M$ to $\mc C$ is continuous, every open neighbourhood of $\bm C'$ contains another $\bm C\in\mc C$ that corresponds to $\bm f$ with only rational components. We have shown that $\bm C\in \mc C_s$. If $\bm C'$ is an interior point of $\mc C$, we can always find $\bm C$ in $\bm C'$'s neighbourhood such that $\bm C'<\bm C$, implying $\bm C'\in \mc C_s$. If $\bm C'$ is on the boundary of $\mc C$, for any given positive $\epsilon$, we can also find $\bm C$ in $\bm C'$'s neighbourhood such that $\bm C'-\epsilon \bm 1<\bm C$, where $\bm 1$ is an all one vector of compatible size. This implies that $\bm C'$ is approachable and thus $\bm C'\in \mc C_s$.

In summary, $\mc C\subseteq \mc C_s$. Together with $\mc C_s\subseteq \mc C$ shown in Lemma~\ref{thm::regionequality}, we have $\mc C_s= \mc C$.

2) Non-synchronized case ($\mc C_u=\mc C$):
  Since the components of $\bm\delta$ are arbitrary real numbers, at any receiver, the timeslot edge of packets transmitted by different users are generally not aligned with each other, and thus results in partial overlapping packets. Moreover, partial overlap of packets leads to collisions.
    We follow the method in \cite{massey1985collision} to reduce the non-synchronized case to the slot-synchronized case. 
    
    
    Given any $\bm T\in \mc C_s$, suppose that $\bm S_{M,q}$ is the protocol matrix that approaches $\bm T$ in the synchronized case. 
    Let $k$ be an arbitrary positive integer. Denote $0^k$ and $1^{k-1}$ as a string of $k$ zeros and a string of $k-1$ ones, respectively. We construct a new protocol matrix $\bm S^{(k)}_{M,q}$ from $\bm S_{M,q}=[s_1,\ldots,s_M]^{\intercal}$
    by replacing its $0$ with $0^k$ and $1$ with $1^{k-1}0$. In the non-synchronized case, each user $u_i$ transmits packet according to the binary sequence in the $i$-th row of $\bm S^{(k)}_{M,q}=[s_1^{(k)},\ldots,s_M^{(k)}]^{\intercal}$. Consider its submatrix $\bm S^{(k),i}_{M,q}=[s_j^{(k)}]^{\intercal}_{j\in J(i)}$. A key observation of user $u_i$'s packets at its intended receiver $r_i$ is that, the collision effect of other users on these packet transmissions (due to partial overlap) is equivalent to that in the synchronized case if each occurrence of $1^{k-1}0$ in $s_j^{(k)}$ for $l_j\in\mc I(l_i)$ is replaced by $1^k$. Such equivalent synchronized sequence is denoted by $ \bar{\bm S}^{(k),i}_{M,q}$. If we take $k$-th decimation of the columns of $ \bar{\bm S}^{(k),i}_{M,q}$, we get $k$ matrices of size $|J(i)|\times q^M$, among which $k-1$ matrices are $\bm S^i_{M,q}=[s_j]^{\intercal}_{j\in J(i)}$, while the rest matrix has the entries in the row corresponding to user $u_i$ all zero. Then for each of $k-1$ matrices, user $u_i$ can approach throughput $T_i$ regardless of the time offsets. Therefore, user $u_i$'s overall throughput is $(k-1)T_i/k$. For sufficiently large $k$, $\bm T$ is approachable in the non-synchronized case. Hence we have $\mc C_s\subseteq \mc C_u$ and thus $\mc C_u=\mc C_s=\mc C$.
\end{proof}

\begin{example}
\label{example::3-link}
Consider a collision channel model with link set $\mc L=\{l_1,l_2,l_3\}$, and collision sets $\mc I(l_1)=\{l_2,l_3\}, \mc I(l_2)=\{l_1\}, \mc I(l_3)=\{l_1\}$. The throughput region $\mc C_s=\mc C_u=\{[T_1,T_2,T_3]\}$ is defined by\begin{equation}
\begin{aligned}
    &T_1=f_1(1-f_2)(1-f_3),\\
    &T_2=f_2(1-f_1), \quad
    T_3=f_3(1-f_1),
    \label{eq::cregion_1}
\end{aligned}
\end{equation} for all $\bm f=[f_1,f_2,f_3]\in[0,1]^3$.

Given a duty factor vector $\bm f=[1/2,1/2,1/2]$, we have $q=2$, $M=3$, $L=q^M=8$, and the  throughput vector $\bm T=[1/8,1/4,1/4]$. Next we give the protocol sequence that approaches $\bm T$. By Construction~\ref{construction_1}, we have \begin{align*}
    \bm A_{3,2} = \begin{bmatrix}
        1&0&1&0&1&0&1&0\\
        1&1&0&0&1&1&0&0\\
        1&1&1&1&0&0&0&0
        \end{bmatrix}.
\end{align*} Since $q=2$, $\bm A_{3,2}$ is binary, $\bm S_{3,2}=\bm A_{3,2}$. When $\bm\delta=\bm 0$, receiver $r_1, r_2$ and $r_3$ will observe $\bm S^1_{3,2} = \bm S_{3,2}$,
\begin{align*}
    \bm S^2_{3,2} &= \begin{bmatrix}
        1&0&1&0&1&0&1&0\\
        1&1&0&0&1&1&0&0
    \end{bmatrix},\\
    \bm S^3_{3,2} &= \begin{bmatrix}
        1&0&1&0&1&0&1&0\\
        1&1&1&1&0&0&0&0
    \end{bmatrix}.
\end{align*}
When $\delta_1^1=1,\delta_1^2=2,\delta_1^3=3$, receiver $r_1$ will observe
\begin{align*}
   \bm S_{3,2}^1[\bm\delta] = \begin{bmatrix}
        0&1&0&1&0&1&0&1\\
        0&0&1&1&0&0&1&1\\
        0&0&0&1&1&1&1&0
        \end{bmatrix}.
\end{align*}
It can be checked that for any integer $\bm\delta$, the throughput vector is always $[1/8, 1/4, 1/4]$.
\end{example}

\section{Outer boundary of reliable throughput region}
\label{sec::outerbd}
We define the \textit{outer boundary} of a reliable throughput region $\mc C$ as the set of all points $\bm T\in \mc C$ such that there does not exist $\bm T'\in \mc C$ with $\bm T<\bm T'$. By Definition \ref{def:region}, every point in $\mc C$ must be mapped from a duty factor vector $\bm f$. 
In \cite{abramson1973packet, massey1985collision} for multiple access communications, it has been proved that every outer boundary point $\bm T$ is mapped from an $\bm f$ with $\sum_i^M f_i= 1$. The intuition is that the total fraction of time used by interfering users for packet transmission cannot exceed $1$. 

It is a natural guess that this correspondence also exists in the generalized collision channel model. That is, the points on the outer boundary of $\mc C$ are mapped from those $\bm f$ with $\min_{i=1,\ldots,M}\{\sum_{j\in J(i)}f_j\} = 1$. Considering Example \ref{example::3-link}, the conjecture is that all the points on the outer boundary of $\mc C$ are determined by an $\bm f$ satisfying $\min\{f_1+f_2,f_1+f_3, f_1+f_2+f_3\}=1$. However, we will see in the following example that this conjecture is false. 
\begin{example}[Counter-example to the outer boundary conjecture]
\label{example::counterex}
Continue on the network model defined in Example~\ref{example::3-link},
given that $T_2=T_3=1/4$ in $\bm T$, if the conjecture above was correct, the maximum $T_1$ should be $1/8$, mapped from $\bm f=[1/2,1/2,1/2]$.
We can, however, approach the throughput vector $\bm T' = [27/200, 1/4,1/4] \geq \bm T$ with $\bm f'=[3/8,2/5,2/5]$ by \eqref{eq::cregion_1}, implying that $\bm T$ is not on the outer boundary of $\mc C$.
\end{example}

In the rest of this section, we will derive a necessary and sufficient condition for the duty factors that map to points on the outer boundary of $\mc C$. 
Pick any link, say $l_1$, with nonempty collision set $\mc I(l_1)\neq\emptyset$. Observe that when the throughputs of other links are not smaller than some fixed values $T_2,\ldots, T_M$, such desirable duty factor vectors should maximize the throughput $T_1$. This observation can be formalized into the following optimization problem:
  \begin{align*} 
    \max_{\substack{[f_1,f_2,\ldots, f_M]\\ \in (0,1)^M}} \quad &f_1\prod_{i:l_i\in \mc I(l_1)}(1-f_i) \tag{OP1}\\
         s.t. \quad &f_k\prod_{i:l_i\in\mc I(l_k)}(1-f_i) \geq T_k, \quad k=2,\ldots,M.
    \end{align*}

The optimal solution set to (OP1) for all possible $[T_2, T_3,\ldots, T_M]$ is the set of duty factors that characterize the outer boundary. Hence,  we now focus on the optimality conditions of (OP1). 
We restrict $\bm f\in(0,1)^M$ to exclude the degenerated cases that the throughputs of some links are $0$.

Note that (OP1) is equivalent to the following problem
    \begin{align*}
   & \min_{\substack{[f_1,f_2,\ldots, f_M]\\ \in (0,1)^M}} \ -\ln f_1 - \sum_{i:l_i\in\mc I(l_1)}\ln(1-f_i) \tag{OP2}\\
  &       s.t. \   -\ln f_k- \sum_{i:l_i\in\mc I(l_k)}\ln(1-f_i) +\ln T_k\leq 0, k=2,\ldots,M.
    \end{align*}
%
\begin{lemma}
\label{lemma::kkt}
    (OP2) is a convex problem. If it is feasible, it satisfies Slater’s conditions. Hence the KKT conditions of (OP2) are sufficient and necessary for its optimal solution.
\end{lemma}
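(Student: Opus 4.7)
The plan is to verify three assertions in order: (i) (OP2) is a convex optimization problem, (ii) feasibility of (OP2) implies Slater's condition, and (iii) the two together yield the necessity and sufficiency of the KKT conditions via the classical convex-programming theorem.

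For (i), the domain $(0,1)^M$ is an open convex set. Each term appearing in the objective or in any constraint left-hand side of (OP2) has the form $-\ln f_j$ or $-\ln(1-f_j)$, each of which is a convex function of its variable on $(0,1)$, since the second derivatives $1/f_j^2$ and $1/(1-f_j)^2$ are strictly positive. Regarded as a function on $(0,1)^M$, each such univariate term remains convex, and a sum of convex functions is convex. Hence both the objective and every constraint function are convex on $(0,1)^M$.

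For (ii), let $g_k(\bm f) := f_k \prod_{i:l_i \in \mc I(l_k)} (1-f_i)$, so the $k$-th constraint is equivalent to $g_k(\bm f) \geq T_k$. Starting from a feasible $\bm f^0 \in (0,1)^M$, I would exhibit $\bm f^* \in (0,1)^M$ with $g_k(\bm f^*) > T_k$ for all $k = 2, \ldots, M$. My approach is to set $\bm f^* = \bm f^0 + \epsilon \bm v$ for small $\epsilon > 0$ and a direction $\bm v$ chosen so that $\frac{d g_k}{d\epsilon}\bigl|_{\epsilon = 0} = g_k(\bm f^0)\bigl(\tfrac{v_k}{f_k^0} - \sum_{j:l_j \in \mc I(l_k)} \tfrac{v_j}{1-f_j^0}\bigr) > 0$ for every constraint active at $\bm f^0$. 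The existence of such $\bm v$ is a Farkas-alternative question on the active-constraint gradients; I would exploit the sign pattern that $\partial_k g_k(\bm f^0) > 0$ while $\partial_k g_{k'}(\bm f^0) \leq 0$ for $k' \neq k$, together with a small relaxation within the interior to avoid degenerate directions, to show the system is solvable. Continuity then guarantees strict feasibility of $\bm f^*$ for $\epsilon$ small enough, while inactive constraints remain strict automatically.

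For (iii), once (OP2) is known to be convex and to satisfy Slater's condition, the classical theorem on convex programming (see, e.g., Boyd and Vandenberghe) immediately yields that the KKT conditions are both necessary and sufficient for global optimality. The step I expect to be the main obstacle is the Slater's verification in (ii): one must simultaneously strict-improve all active constraints, which is delicate when several constraints are tight at the candidate feasible point. The multiplicative product structure of $g_k$ and the asymmetric sign pattern of its partial derivatives (positive in the $k$-th coordinate, non-positive elsewhere) is the key tool I would use to dispel this concern.
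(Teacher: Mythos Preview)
Your convexity check in (i) is correct and matches the paper's: both observe that each term $-\ln f_j$ or $-\ln(1-f_j)$ has positive second derivative, so the objective and every constraint function is convex on $(0,1)^M$. Part (iii) is likewise the standard consequence and agrees with the paper.

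Your Slater argument in (ii) is a different route and, as written, has a gap. The sign pattern you invoke---$\partial_k g_k>0$ while $\partial_k g_{k'}\le 0$ for $k'\ne k$---is a Z-matrix structure, and such systems do \emph{not} in general admit a direction $\bm v$ making all directional derivatives positive; a $2\times 2$ instance with off-diagonal magnitudes exceeding the diagonal already fails. Hence the Farkas alternative does not close from the sign pattern alone. Two structural facts, which you do not use, are what actually force Slater here: (a) the variable $f_1$ appears only in the objective of (OP2), never in any constraint, so decreasing $f_1$ strictly relaxes every constraint $k$ with $l_1\in\mc I(l_k)$ at zero cost to feasibility; and (b) the collision graph is assumed weakly connected, so once some constraints are strict the slack can be propagated to all remaining links. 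The paper's proof exploits precisely these: it first shrinks $f_1$ to make the constraints touching $l_1$ strict, then iteratively shrinks the duty factors of the newly-slack links, pushing strict feasibility outward along the collision graph until every constraint is strict. Your linearization framework could be completed, but only by importing these same two ingredients; the generic sign pattern is not enough.
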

\begin{proof}
For notation simplicity, for $k=1,\ldots, M$, define functions 
$$g_k(\bm f)=-\ln f_k - \sum_{i:l_i\in\mc I(l_k)}\ln(1-f_i) +\ln T_k.$$ 
The objective function becomes $g_1(\bm f)$ and the constraint functions are $g_k(\bm f)$ for $k=2,\ldots, M$. They are all convex because their Hessian matrices are positive semi-definite, i.e., \begin{align*}
    \text{diag}\left(\frac{1}{f_k^2}, \left(\frac{1}{(1-f_i)^2}\right)_{i:l_i\in\mc I(l_k)}\right) \succeq \bm 0,
\end{align*} 
%
Hence, (OP2) is a convex optimization problem. 

If (OP2) is feasible, then there exist $0<f_1,\ldots,f_M<1$ such that $g_k(\bm f)+\ln T_k \leq 0$ for $k=2,\ldots, M$. Let us reduce $f_1$ by $\epsilon>0$ so that $f_1'=f_1-\epsilon>0$. With $\bm f'=[f_1',f_2,\ldots,f_M]$, we have $g_k(\bm f') +\ln T_k<g_k(\bm f) +\ln T_k \leq 0$ for $k:l_1\in\mc I(l_k)$, and $g_k(\bm f') +\ln T_k= g_k(\bm f) +\ln T_k\leq 0$ for $k:l_1\notin\mc I(l_k)$. So $\bm f'$ is feasible. We further reduce $f_k$ for $k:l_1\in\mc I(l_k)$ by $\epsilon'$. Let $\bm f''$ be such that $f''_k=f_k-\epsilon'$ for $k:l_1\in\mc I(l_k)$, $f''_k=f_k$ for $k:l_1\notin\mc I(l_k)$ and $f''_1=f_1'$. There exists $\epsilon'>0$ such that $\bm f''>\bm 0$ and $g_k(\bm f') +\ln T_k<g_k(\bm f'') +\ln T_k < 0$ for $k:l_1\in\mc I(l_k)$. Meanwhile, for links that have these renewed links in their collision set, we have $g_k(\bm f'') +\ln T_k < 0$. Since there is no isolated link, we can keep reducing the duty factors to reach a point $\bm f'''$ such that $g_k(\bm f''') +\ln T_k < 0$ for all $k=2,\ldots, M$. We conclude that Slater's condition is satisfied.

%
\end{proof}

The KKT points to (OP2), therefore, are those duty factor vectors corresponding to the outer boundary of the reliable throughput region. Let $\lambda_1=T_1=1$ and $\bm \lambda = [\lambda_1, \lambda_2,\ldots, \lambda_M]$, where $\lambda_i, i=2,\ldots, M$, are the Lagrangian multipliers. The KKT conditions are: for $k=2,\ldots,M$,
\begin{enumerate}
    \item $-\ln f_k-\sum_{i:l_i\in\mc I(l_k)}\ln (1-f_i)+\ln T_k\leq 0$,
    \item $\bm f\in (0,1)^M$,
    \item $\bm \lambda \geq \bm 0$,
    \item $\lambda_k(-\ln f_k-\sum_{i:l_i\in\mc I(l_k)}\ln (1-f_i)+\ln T_k)=0$, 
      \item $\nabla_{\bm f} L(\bm f, \bm \lambda)=\bm 0$, where $L(\bm f,\bm \lambda) = \sum_{k=1}^M \lambda_k (-\ln f_k - \sum_{i:l_i\in\mc I(l_k)}\ln(1-f_i)+\ln T_k)$.
\end{enumerate} 

Let $\bm F = \text{diag}(\bm f)$, $\bm I$ be the $M\times M$ identity matrix, and $\bm E=[e_{ij}]\in\{0,1\}^{M\times M}$ be the adjacency matrix of the collision graph, i.e., $e_{ij}=1$ if $l_i\in\mc I(l_j)$ and $e_{ij}=0$ otherwise. The following theorem demonstrates that whether a duty factor vector $\bm f$ corresponds to an outer boundary point of $\mc C$ or not depends on $\bm F$ and the adjacency matrix $\bm E$.

\begin{theorem}
\label{thm::df_outerbd}
   For any $\bm f \in(0,1)^M$, it determines a point on the outer boundary of reliable throughput region, if only if the Perron–Frobenius eigenvalue of $\bm F(\bm E+\bm I)$ is one.
    
\end{theorem}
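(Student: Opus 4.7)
My plan is to identify the KKT stationarity condition of (OP2) with a non-negative eigenvalue equation for $\bm F(\bm I+\bm E)$, and then obtain the ``if and only if'' by pairing each direction with Perron-Frobenius theory.

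First I would carry out the stationarity computation. Inside
$L(\bm f,\bm\lambda)=\sum_{k=1}^{M}\lambda_k\bigl(-\ln f_k-\sum_{i:l_i\in\mc I(l_k)}\ln(1-f_i)+\ln T_k\bigr)$,
the variable $f_j$ appears only in $-\ln f_j$ (from the $k=j$ summand) and in $-\ln(1-f_j)$ inside each summand whose $k$ satisfies $l_j\in\mc I(l_k)$, equivalently $e_{jk}=1$. Therefore
\begin{equation*}
\frac{\partial L}{\partial f_j}=-\frac{\lambda_j}{f_j}+\frac{1}{1-f_j}\sum_{k=1}^{M}e_{jk}\lambda_k=0,
\end{equation*}
and clearing denominators yields $\lambda_j=f_j\bigl(\lambda_j+(\bm E\bm\lambda)_j\bigr)=f_j\bigl((\bm I+\bm E)\bm\lambda\bigr)_j$, which in matrix form reads $\bm\lambda=\bm F(\bm I+\bm E)\bm\lambda$. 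Combined with $\lambda_1=1$ and dual feasibility $\bm\lambda\geq\bm 0$, this says exactly that $1$ is an eigenvalue of the non-negative matrix $\bm F(\bm I+\bm E)$ admitting a non-negative eigenvector whose first component is positive.

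For the forward direction, if $\bm f$ gives an outer-boundary throughput, fixing $T_k=T_k(\bm f)$ for $k\geq 2$ makes $\bm f$ a global maximizer of (OP1); otherwise a competing feasible $\bm f^\star$ with $T_1(\bm f^\star)>T_1(\bm f)$ would, after a small perturbation exploiting continuity of $\bm T(\cdot)$ and downward comprehensiveness of $\mc C$, produce a $\bm T'\in\mc C$ strictly dominating $\bm T(\bm f)$, contradicting the outer-boundary hypothesis. Lemma \ref{lemma::kkt} then delivers KKT multipliers satisfying the stationarity equation, and Perron-Frobenius yields $\rho(\bm F(\bm I+\bm E))\geq 1$. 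Conversely, if $\rho(\bm F(\bm I+\bm E))=1$, Perron-Frobenius provides $\bm v\geq\bm 0$, $\bm v\neq\bm 0$, with $\bm F(\bm I+\bm E)\bm v=\bm v$; rescaling so $v_1=1$ and setting $\lambda_k:=v_k$, $T_k:=T_k(\bm f)$ verifies all four KKT conditions of (OP2) at $\bm f$ (primal feasibility with equality, dual non-negativity, stationarity via the eigenvector equation, complementary slackness from all constraints being active), so by Lemma \ref{lemma::kkt} $\bm f$ is optimal for (OP1) and hence $\bm T(\bm f)$ lies on the outer boundary.

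The main obstacles are technical Perron-Frobenius points in the reducible case. First, the forward direction only directly gives $\rho(\bm F(\bm I+\bm E))\geq 1$; the upgrade to equality is immediate under irreducibility (strong connectivity of the collision graph), but the paper assumes only weak connectivity, so I expect to use a block-triangular decomposition induced by the DAG of strongly connected components, invoke Perron-Frobenius inside each irreducible block, and use the complementary-slackness support of $\bm\lambda$ to rule out diagonal blocks with spectral radius above $1$. Second, the reverse direction needs $v_1>0$ to rescale; I would justify this using the positive diagonal of $\bm F(\bm I+\bm E)$ together with the standing assumption that no link is isolated in the collision graph.
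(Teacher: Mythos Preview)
Your approach is essentially the same as the paper's: both derive the stationarity identity $\bm\lambda=\bm F(\bm I+\bm E)\bm\lambda$ from the KKT system of (OP2) and then invoke Perron--Frobenius in each direction, with the reverse direction handled by manufacturing $T_k$'s that make the KKT conditions hold. The only cosmetic differences are that the paper sets $T_k=T_k(\bm f)$ only when $\lambda_k>0$ (allowing slack when $\lambda_k=0$) whereas you take all constraints active, and that you are more explicit than the paper about the reducible-case subtleties (upgrading $\rho\geq 1$ to $\rho=1$, and ensuring $v_1>0$), which the paper simply asserts without further justification.
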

\begin{proof}
It follows from $\partial L(\bm f, \bm \lambda)/\partial f_k = 0$ and $f_k>0$ that
\begin{equation}
\label{eq::outercond}
    \lambda_k = f_k \sum_{\substack{i:l_i\in\mc I(l_k)\cup\{l_k\}}} \lambda_i,\quad k=1,2,\ldots,M.
\end{equation} 
Write these $M$ equations into a compact form and we have \begin{equation}\label{equ:KKTder}
    \bm{\lambda} = \bm F(\bm E+\bm I)\bm{\lambda},
\end{equation}
which is equivalent to the KKT condition $\nabla_{\bm f}L(\bm f, \bm \lambda)= \bm 0$. 

Note that $\bm F(\bm E+\bm I)$ is a non-negative matrix. 
According to Perron-Frobenius Theorem\cite{axelsson_1994}, for a non-negative matrix, there exists an eigenvector with non-negative components and the corresponding eigenvalue (called Perron-Frobenius eigenvalue) is non-negative and greater than or equal, in absolute value, to all other eigenvalues. To satisfy \eqref{equ:KKTder}, $\bm F(\bm E+\bm I)$ must have the Perron-Frobenius eigenvalue equal to one.

On the other hand, if the Perron-Frobenius eigenvalue $\bm F(\bm E+\bm I)$ is one, then the (3) and (5) KKT conditions are satisfied. Meanwhile, we can find a set of $T_2,\ldots,T_M$ such that the (1) and (4) KKT conditions are also satisfied. For example, let $T_k = f_k\prod_{i:l_i\in \mc I(l_k)}(1-f_i)$ for all $\lambda_k>0$ and $T_k \leq f_k\prod_{i:l_i\in \mc I(l_k)}(1-f_i)$ for all $\lambda_k=0$. Therefore, $[\bm f, \bm\lambda]$ satisfies all the KKT conditions.
By Lemma \ref{lemma::kkt}, $\bm f$ is an optimal solution to (OP2) and thus it determines a point on the outer boundary of a reliable throughput region. 
%
%
\end{proof}


\begin{example}
Continue on Example~\ref{example::counterex}.
When $\bm f=[1/2,1/2,1/2]$, we have \begin{align*}
    \bm F(\bm E+\bm I) &= \begin{bmatrix}
        1/2&1/2&1/2\\
        1/2&1/2&0\\
        1/2&0&1/2
    \end{bmatrix},
\end{align*} of which the Perron–Frobenius eigenvalue $\rho=(1+\sqrt{2})/2\neq 1$. Hence, the corresponding throughput vector $[1/8,1/4,1/4]$ is not on the outer boundary of $\mc C$.
Let $\bm F' = \text{diag}(\bm f/\rho)$. Then $\bm F'(\bm E+\bm I)$ has the Perron–Frobenius eigenvalue one, indicating that the corresponding throughput vector $[2/(1+\sqrt{2})^3, \sqrt{2}/(1+\sqrt{2})^2, \sqrt{2}/(1+\sqrt{2})^2]$ is on the outer boundary of $\mc C$.
%
\end{example}

For multiple access communications considered in \cite{massey1985collision}, $\bm E+\bm I$ is the matrix of ones. The condition that the Perron-Frobenius eigenvalue of $\bm F(\bm E+\bm I)$ is one is equivalent to $\sum_{i=1}^M f_i = 1$, which is consistent with the conclusion in \cite{massey1985collision}.

\section{Concluding remarks}\label{section:conclusion}

We discuss communication systems with a massive number of users in this section. When all links employ the same duty factor $f$, the throughputs $T_i = f(1-f)^{|\mc I(l_i)|}$ for all $i=1,\ldots, M$. Then the sum throughput of the system is $\sum_{i=1}^M f(1-f)^{|\mc I(l_i)|}$.
Consider a special case that $|\mc I(l_i)|=N-1$ for all links. Then $f=1/N$ maximizes both individual and sum throughputs, and $\sum_i T_i = M(1-1/N)^{N-1}/N$. When the number of users $M$ tends to infinity, if $N$ is a constant not scaling with $M$, then the sum throughput also tends to infinity; if $N$ scales with $M$ linearly, e.g., $N=M/a$ for some constant $a\in\mathbb{Z}_{++}$, then the sum throughput limit becomes $a/e$.

\newpage
\bibliographystyle{IEEEtran}
\bibliography{isit_final_ver.bib}


\end{document}